\documentclass[pra,aps,twocolumn,superscriptaddress,showpacs]{revtex4}

\usepackage{amsmath}
\usepackage[standard]{ntheorem}
\usepackage[colorlinks=true,citecolor=blue,urlcolor=blue]{hyperref}
\begin{document}
\title{Dimension-Independent Bounds for Hardy's Experiment}

\author{Zhen-Peng Xu}
 \affiliation{Theoretical Physics Division, Chern Institute of Mathematics, Nankai University,
 Tianjin 300071, People's Republic of China}

\author{Hong-Yi Su}
\email{hysu@mail.nankai.edu.cn}
 \affiliation{Theoretical Physics Division, Chern Institute of Mathematics, Nankai University,
 Tianjin 300071, People's Republic of China}

\author{Jing-Ling Chen}
\email{chenjl@nankai.edu.cn}
 \affiliation{Theoretical Physics Division, Chern Institute of Mathematics, Nankai University,
 Tianjin 300071, People's Republic of China}
 \affiliation{Centre for Quantum Technologies, National University of Singapore,
 3 Science Drive 2, Singapore 117543}

\begin{abstract}
Hardy's paradox is of fundamental importance in quantum information theory. So far, there have been two types of its extensions into higher dimensions: in the first type the maximum probability of nonlocal events is roughly $9\%$ and remains the same as the dimension changes (dimension-independent), while in the second type the probability becomes larger as the dimension increases, reaching approximately $40\%$ in the infinite limit. Here, we (i) give an alternative proof of the first type, (ii) study the situation in which the maximum probability of nonlocal events can also be dimension-independent in the second type, and (iii) conjecture how the situation could be changed in order that (ii) still holds.
\end{abstract}

\pacs{03.65.Ud,
03.67.Mn,
42.50.Xa}
\maketitle

\section{Introduction}
Hardy's paradox \cite{Hardy92,Hardy93,Goldstein94,cereceda04,CBTB13} is one of central topics in quantum information science, and has been regarded as ``the simplest form of Bell's theorem." \cite{Mermin95} Conventionally, the Bell nonlocality is detected by the violation of a certain Bell's inequality, showing a contradiction between quantum correlations and local hidden variable (LHV) models. To push the contradiction to extremes, Greenberger, Horne, and Zeilinger (GHZ) proposed a tripartite paradox to prove Bell's theorem without inequalitites, and demonstrated that the probability of nonlocal events is $100\%$. But Hardy's paradox is advantageous over the GHZ paradox, since in the former only two parties involved are enough to rule out LHV models. In this sense, Hardy's paradox is the simplest form of a nonlocality test.

However, the drawback of Hardy's paradox is that the probability of nonlocal events is just $9\%$, far lower than that in the GHZ paradox. Thus it is significant to extend  Hardy's original argument to a form that possesses some higher probability than $9\%$. Efforts include (i) increasing the number of outcomes (i.e., high dimensional) for each party \cite{KC05,SG11,RZS12,JAZ13} and (ii) considering more settings in question \cite{Hardy97,BBDH97}. Collaborating with some other coauthors, in Ref. \cite{JAZ13} we proposed a high-dimensional bipartite Hardy's paradox which is equivalent to a tight Bell's inequality, and numerically showed that the probability of nonlocal events can be over $40\%$.

In literatures there have been two types of bipartite extensions of Hardy's original paradox: (a) proposals in Ref. \cite{KC05,SG11,RZS12} in which the maximum probability of nonlocal events is dimension-independent and has a limit of $9\%$, (b) the paradox we presented in \cite{JAZ13} in which the probability can otherwise go beyond.

In this paper, we consider these two types of Hardy's paradox. For the first type, we give an alternative proof of the dimension-independent maximum probability of nonlocal events. We study the situation in which the probability in the second type could be decreased to that in the first type. With numerical evidence we also conjecture how the situation could be changed in the second type, so that the maximum probability of nonlocal events is still dimension-independent but higher than $9\%$.

\section{Preliminaries}
There is an advantage for bipartite pure state that it can be expressed in the form of matrix, so that matrix theory can be used to deal with problems considered here. Denote $|\psi\rangle = \sum\limits_{i = 1}^m \sum\limits_{j = 1}^n h_{ij} |ij\rangle$ as $|\psi\rangle \rightarrow H = (h_{ij})_{m \times n}$, where $|H| = (\sum_{ij} |h_{ij}|^2)^{\frac{1}{2}} = 1$. Then we have the following two lemmas:

\begin{lemma}
Define matrix $U$ (in the original basis) as a new setting for Alice and $V$ for Bob, then a state $H$ in the basis of the new settings is $U^* H V^\dagger$.
\end{lemma}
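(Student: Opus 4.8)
The plan is to compute directly the coefficients of $|\psi\rangle$ in the product basis built from the two new settings and to recognize the resulting double sum as a matrix product. Let $\{|u_{k}\rangle\}$ and $\{|v_{l}\rangle\}$ be the orthonormal bases associated with Alice's setting $U$ and Bob's setting $V$, so that ``the state $H$ in the basis of the new settings'' is by definition the matrix $H'=(h'_{kl})$ with $h'_{kl}=\langle u_{k}v_{l}|\psi\rangle$.

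First I would insert the resolution of the identity $I=\sum_{i}|i\rangle\langle i|$ on each factor to get $h'_{kl}=\sum_{ij}h_{ij}\langle u_{k}|i\rangle\langle v_{l}|j\rangle$. Next I would use the defining relation between a setting and its basis --- that the entries of $U$ and $V$ are the components of the vectors $|u_{k}\rangle$ and $|v_{l}\rangle$ in the original basis --- to rewrite $\langle u_{k}|i\rangle$ and $\langle v_{l}|j\rangle$ as entries of $U^{*}$ and $V^{*}$. Substituting gives $h'_{kl}=\sum_{ij}(U^{*})_{ki}h_{ij}(V^{*})_{lj}$, and since $(V^{*})_{lj}=(V^{\dagger})_{jl}$ the right-hand side is exactly $(U^{*}HV^{\dagger})_{kl}$, which is the claim. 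Equivalently, one may argue at the operator level: under the encoding $|\psi\rangle\to H$ linearity gives $(A\otimes B)|\psi\rangle\to AHB^{T}$, and re-expanding in the new settings amounts to acting with $U^{*}\otimes V^{*}$ (the change-of-basis operators fixed by the settings), which again yields $U^{*}H(V^{*})^{T}=U^{*}HV^{\dagger}$. As a sanity check, $|U^{*}HV^{\dagger}|=|H|=1$ since $U$ and $V$ are unitary, so the output is again a normalized state matrix.

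I expect the only real obstacle to be bookkeeping of conjugates and transposes: whether the components of $|u_{k}\rangle$ occupy the $k$-th row or the $k$-th column of $U$, and correspondingly whether the complex conjugate attaches to Alice's index and the conjugate-transpose to Bob's (as here) or vice versa. There is no analytic content beyond completeness of an orthonormal basis and the definition of matrix multiplication, so once the convention for a ``setting'' has been pinned down the identity follows in a couple of lines.
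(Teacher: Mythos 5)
Your proposal is correct and is essentially the paper's own argument: the paper likewise fixes the convention that the new basis kets are $U$ (resp.\ $V$) times the old ones and then reads off the coefficient matrix as $(U^\dagger)^T H V^\dagger = U^* H V^\dagger$, the only cosmetic difference being that you project each coefficient $\langle u_k v_l|\psi\rangle$ with a resolution of the identity while the paper substitutes the old basis vectors in the bilinear expression $[|0\rangle,|1\rangle]\,H\,(|0\rangle,|1\rangle)^T$. The convention issue you flag is resolved exactly as you guess, so no gap remains.
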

\begin{proof}
Denote the original bases of Alice and Bob both as $|0\rangle, |1\rangle$, the new setting as $|0_a\rangle, |1_a\rangle$ and $|0_b\rangle, |1_b\rangle$ for Alice and Bob, respectively. and so
\[
\begin{bmatrix}
|0_a\rangle\\
|1_a\rangle
\end{bmatrix}
=
U
\begin{bmatrix}
|0\rangle\\
|1\rangle
\end{bmatrix},
\begin{bmatrix}
|0_b\rangle\\
|1_b\rangle
\end{bmatrix}
=
V
\begin{bmatrix}
|0\rangle\\
|1\rangle
\end{bmatrix}.
\]
In this way, the state $H$ is expressed as
\[
[|0\rangle, |1\rangle] H \begin{bmatrix}
|0\rangle\\
|1\rangle
\end{bmatrix} =
[|0_a\rangle, |1_a\rangle] (U^\dagger)^T  H V^\dagger\begin{bmatrix}
|0_b\rangle\\
|1_b\rangle
\end{bmatrix},
\]
Thus, in the basis of new setting, the state is expressed as $U^* H V^\dagger$.
\end{proof}

\begin{lemma}
Given an $n$-dimension vector $\vec{x}$ and an $m \times n$-dimension matrix $A$, with $\vec{a}_i$ as the $i$-th row of $A$, then $|A\vec{x}| \leq |A||\vec{x}|$ must hold, where `$=$' is achieved only when $\vec{a}_i^\dagger$ is parallel to $\vec{x}$ for each $i$.
\end{lemma}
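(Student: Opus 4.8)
The plan is to reduce the claim to the Cauchy--Schwarz inequality applied one row at a time, since the norm $|\cdot|$ used here on matrices is the Frobenius (entrywise $\ell^2$) norm. First I would write the squared length of $A\vec x$ as a sum over rows,
\[
|A\vec x|^2 \;=\; \sum_{i=1}^m \bigl| \vec a_i \cdot \vec x \bigr|^2 ,
\]
where $\vec a_i\cdot\vec x = \sum_{j=1}^n a_{ij} x_j$ is the $i$-th component of $A\vec x$. The observation that makes the equality clause come out correctly is that each summand equals the modulus of the Hermitian inner product of $\vec x$ with the vector $\vec a_i^\dagger$, whose components are $\overline{a_{ij}}$.

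Next I would bound each term by Cauchy--Schwarz, $|\vec a_i\cdot\vec x| \le |\vec a_i|\,|\vec x|$, and sum:
\[
|A\vec x|^2 \;\le\; \sum_{i=1}^m |\vec a_i|^2\,|\vec x|^2 \;=\; \Bigl(\sum_{i,j} |a_{ij}|^2\Bigr)|\vec x|^2 \;=\; |A|^2\,|\vec x|^2 ,
\]
so that taking square roots yields $|A\vec x|\le |A|\,|\vec x|$. For the equality assertion, note that the displayed chain is an equality only if Cauchy--Schwarz is saturated in \emph{every} row, and saturation in row $i$ forces $\vec x$ and $\vec a_i^\dagger$ to be linearly dependent, i.e. $\vec a_i^\dagger$ parallel to $\vec x$, for each $i$ (a zero row being trivially parallel to anything).

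I do not expect a genuine obstacle here; the only points needing a little care are the complex conjugation hidden in the pairing — so that the parallelism condition emerges as $\vec a_i^\dagger \parallel \vec x$ rather than $\vec a_i \parallel \vec x$ — and the degenerate cases $\vec x=0$ or some $\vec a_i=0$, in which both sides vanish or the parallelism statement is vacuous. It would also be worth remarking that the stated condition is not only necessary but sufficient, so the ratio $|A\vec x|/|\vec x|$ actually attains the bound $|A|$ precisely on such $\vec x$; this is what makes the lemma useful for bounding Hardy probabilities later.
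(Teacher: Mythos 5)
Your proof is correct and follows essentially the same route as the paper's: expand $|A\vec{x}|^2$ as the sum over rows of $|\vec{a}_i\vec{x}|^2$, apply Cauchy--Schwarz term by term, and observe that equality requires saturation in every row, which is the stated parallelism condition. Your added remarks on the complex conjugation and the degenerate cases are sensible refinements but do not change the argument.
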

\begin{proof}
\begin{eqnarray}
|A\vec{x}|^2 &=& |(\vec{a}_1\vec{x}, \cdots, \vec{a}_n\vec{x})|^2 = \sum_{i=1}^n |\vec{a}_i\vec{x}|^2 \nonumber\\
&\leq& \sum_{i=1}^n |\vec{a}_i|^2 |\vec{x}|^2 = |A|^2 |\vec{x}|^2,
\end{eqnarray}
where `$=$' is achieved only when $|\vec{a}_i\vec{x}| = |\vec{a}_i||\vec{x}|$ for each $i$, i.e., $\vec{a}_i^\dagger$ is parallel to $\vec{x}$.
\end{proof}

For convenience, we will take the symbol conventions as following:
\begin{enumerate}
\item $|\psi\rangle \to H =(h_{ij})$ in the basis of $A_1, B_1$;
\item the setting of $A_2$ in the basis of $A_1$ as $U^\ast$ while the one of $B_2$  in the basis of $B_1$ as $V^\dagger$;
\item $H' =(h'_{ij}) = UH$, $H'' =(h''_{ij}) = HV$ and $Q = UHV$, so $|\psi\rangle \to H'$ in the basis of $A_2, B_1$, $|\psi\rangle \to H''$ in the basis of $A_1, B_2$, and $|\psi\rangle \to Q$ in the basis of $A_2, B_2$;
\item $\vec{h}_{12} = \left(
             \begin{array}{ccc}
               h_{12} & \cdots & h_{1n} \\
             \end{array}
           \right),
\vec{h}_{21} = \left(
             \begin{array}{ccc}
               h_{21} & \cdots & h_{n1} \\
             \end{array}
           \right)^T,$
$
\vec{t} = \left(
             \begin{array}{ccc}
               h_{22} & \cdots & h_{2n} \\
             \end{array}
           \right),
H_{22} = \left(
               \begin{array}{ccc}
                 h_{22} & \cdots & h_{2n} \\
                 \vdots & \ddots & \vdots \\
                 h_{n2} & \cdots & h_{nn} \\
               \end{array}
               \right).
$
\item $P_I(i,j) = P(A_2 = i, B_2 = j)$ in the first type of extension and $P_{II}(i,j) = P(A_2 = i, B_2 = j)$ in the second type of extension.
\end{enumerate}

\section{The First type of extensions of Hardy's Paradox}
The first type \cite{KC05,SG11,RZS12} of extensions of the original Hardy's paradox into high-dimension reads
\begin{eqnarray}\label{kcond}
P(A_1 = 1, B_1 = 1) = 0,\nonumber\\
P(A_1 \neq 1, B_2 = 1) = 0, \nonumber\\
P(A_2 = 1, B_1 \neq 1) = 0, \nonumber\\
P_I = P(A_2 = 1, B_2 = 1) > 0,
\end{eqnarray}
which, in the notations in Sec. II, implies that $h_{11} = h'_{1i} = h''_{i1} = 0$, with $i = 2, \cdots, n$.

It has been conjectured in ref. \cite{SG11} and proved in ref. \cite{RZS12} that
\begin{theorem}
$
\max P_I =\max |q_{11}|^2 = \frac{5\sqrt{5} - 11}{2}.
$
\end{theorem}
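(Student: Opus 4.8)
The plan is to parametrize the constraints directly in the matrix language of Sec. II and reduce the optimization to a one-variable problem. By the conditions following \eqref{kcond}, in the $A_1,B_1$ basis we have $h_{11}=0$; the condition $P(A_1\neq 1,B_2=1)=0$ says that the first column of $H''=HV$ vanishes except in the top entry, i.e. rows $2,\dots,n$ of $H$ are orthogonal to the first column of $V$; and symmetrically $P(A_2=1,B_1\neq 1)=0$ says rows $2,\dots,n$ of $H^\dagger$ (columns of $H$) are orthogonal to the first column of $U^\dagger$. Writing $q_{11}=(UHV)_{11}=\vec u_1 H \vec v_1$ where $\vec u_1$ is the first row of $U$ and $\vec v_1$ the first column of $V$, the two ``$=0$'' constraints force $\vec u_1$ to have support only where it pairs with the first row of $H$, and likewise $\vec v_1$; after using unitarity to normalize, $|q_{11}|$ becomes an explicit function of the overlaps of $\vec h_{12}$, $\vec h_{21}$, and the block $H_{22}$ with these constrained directions.

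Concretely, I would first show that without loss of generality one may take $H$ to be supported on a $2\times 2$ corner, or rather that only the quantities $a=h_{11}=0$ is irrelevant, $b=|\vec h_{12}|$, $c=|\vec h_{21}|$, and the relevant projection of $H_{22}$ enter; a short argument (choosing the new settings to align optimally, via Lemma 2) collapses the $n$-dimensional freedom. This is the step where Lemma 2 does the real work: the inequality $|A\vec x|\le |A||\vec x|$ with its equality condition tells us exactly how to choose the free rows/columns of $U$ and $V$ to maximize $|q_{11}|$ given the fixed data of $H$, and the equality case pins down the optimal settings. After this reduction, $P_I=|q_{11}|^2$ is expressed through a normalized state $H$ of the effective form $\begin{pmatrix} 0 & b \\ c & d\end{pmatrix}$ (with $b^2+c^2+d^2\le 1$, the deficit being the norm carried by the discarded components, which one argues should be zero at the optimum), and the three Hardy constraints translate into one algebraic relation among $b,c,d$.

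Then I would carry out the calculus: maximize $|q_{11}|^2$ over the remaining one or two real parameters subject to that relation, e.g. by Lagrange multipliers or by direct substitution. The critical-point equation will be a low-degree polynomial whose relevant root is the golden-ratio-type quantity $\tfrac{5\sqrt5-11}{2}$; one recognizes $\tfrac{5\sqrt5-11}{2}=\big(\tfrac{\sqrt5-1}{2}\big)^{5}$, the fifth power of $1/\varphi$, which is the known value of the original ($n=2$) Hardy probability, consistent with the claim that the bound is dimension-independent. Finally I would check achievability by exhibiting the explicit $H$, $U$, $V$ attaining it (the $2\times 2$ Hardy state embedded in dimension $n$), so that the maximum is in fact attained.

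The main obstacle is the first reduction step: justifying rigorously that enlarging the dimension cannot help, i.e. that the optimal $U,V,H$ can be taken block-diagonal with a nontrivial $2\times 2$ block. One must be careful that the three linear constraints do not conspire, in higher dimension, to permit a larger $|q_{11}|$; the cleanest route is to apply Lemma 2 twice (once for Alice's setting, once for Bob's) to bound $|q_{11}|$ by a quantity depending only on $|\vec h_{12}|$, $|\vec h_{21}|$ and $\|H_{22}\|$ or a single relevant singular value, and then observe that the constraints, re-expressed in these invariants, are exactly the $n=2$ constraints. Once that is in place, the remaining optimization is routine.
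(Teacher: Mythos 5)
Your plan follows essentially the same route as the paper's proof: solve the three zero-probability constraints for the free entries of $U$ and $V$ in terms of $H$ (via $H_{22}^{-1}$), apply Lemma 2 twice to bound $|q_{11}|^2$ by an expression depending only on the invariants $|\vec{h}_{12}|$, $|\vec{h}_{21}|$, $|H_{22}|$, and then optimize subject to $|\vec{h}_{12}|^2+|\vec{h}_{21}|^2+|H_{22}|^2=1$ --- which is precisely how the paper disposes of the dimension-reduction issue you flag as the main obstacle. The remaining low-degree optimization and the achieving state $a|01\rangle+e^{i\theta}\sqrt{1-2a^2}\,|10\rangle+a|11\rangle$ with $a=\sqrt{(3-\sqrt{5})/2}$ give the claimed value $(5\sqrt{5}-11)/2$, exactly as in the paper.
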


Here we present another brief and direct proof.
\begin{proof}		
Without lose of generality, we assume $H_{22}$ is invertible here and $P_I$ is continuous.\\
From condition \eqref{kcond} we know:
\begin{flalign*}
\vec{h}_{21} v_{11} + H_{22} (v_{21}, v_{31}, \cdots, v_{n1})^T = 0,\\
\vec{h}_{12} u_{11} + (u_{12}, u_{13}, \cdots, u_{1n}) H_{22} = 0,
\end{flalign*}
that is,
\begin{eqnarray}
(v_{21}, v_{31}, \cdots, v_{n1})^T &=& -H_{22}^{-1} \vec{h}_{21} v_{11},\nonumber\\
(u_{12}, u_{13}, \cdots, u_{1n})  &=& -\vec{h}_{12} H_{22}^{-1} u_{11}.
\end{eqnarray}
Since $\sum_{i = 1}^n |u_{1i}|^2 = \sum_{i = 1}^n |v_{i1}|^2 = 1$, then
\begin{eqnarray}
|u_{11}|^2 = \frac{1}{1 + |\vec{h}_{12} H_{22}^{-1}|^2} \leq \frac{|H_{22}|^2}{|H_{22}|^2 + |\vec{h}_{12}|^2},\nonumber\\
|v_{11}|^2 = \frac{1}{1 + |H_{22}^{-1} \vec{h}_{21}|^2} \leq \frac{|H_{22}|^2}{|H_{22}|^2 + |\vec{h}_{21}|^2},
\end{eqnarray}
where `$=$'s are achieved only when every column of $H_{22}$ parallels with $(u_{12}, \cdots, u_{1n})^\dagger$ and every row of $H_{22}$ parallels with $(v_{21}, \cdots, u_{n1})^\ast$.\\
Thus,
\begin{eqnarray}
P_I &=& P_I(1,1) = |q_{11}|^2\nonumber\\
&=& |u_{11} \vec{h}_{12} H_{22}^{-1} \vec{h}_{21} v_{11}|^2\nonumber\\
&=& \frac{|\vec{h}_{12} H_{22}^{-1} \vec{h}_{21}|^2}{(1 + |\vec{h}_{12} H_{22}^{-1}|^2)(1 + |H_{22}^{-1} \vec{h}_{21}|^2)}\nonumber\\
&\leq & \frac{|\vec{h}_{12}|^2 |H_{22}|^2 |\vec{h}_{21}|^2}{(|H_{22}|^2 + |\vec{h}_{12}|^2)(|H_{22}|^2 + |\vec{h}_{21}|^2)}.
\end{eqnarray}
Since $|\vec{h}_{12}|^2 + |H_{22}|^2 + |\vec{h}_{21}|^2 = 1$, we obtain
\begin{eqnarray}
\max P_I =\max |q_{11}|^2 = \frac{5\sqrt{5} - 11}{2}.
\end{eqnarray}
$P_I$ reaches this maximum only when $|\psi\rangle=a |01\rangle + e^{i \theta}\sqrt{1-2a^2} |10\rangle + a |11\rangle$, up to some local unitary transformations, where $a = \sqrt{(3-\sqrt{5})/2}$ and $\theta$ is an arbitrary angle.
\end{proof}

\section{The Second type of extensions of Hardy's Paradox}
The second type \cite{JAZ13} of extensions of the original Hardy's paradox into high-dimension reads
\begin{eqnarray}
P(B_1 < A_1) = P(A_1 < B_2) = P(A_2 < B_1) = 0,\nonumber\\
P_{II} = P(A_2 < B_2) > 0,
\end{eqnarray}
which is equivalent to the violation of a tight Bell inequality---the CGLMP inequality \cite{CGLMP02,ZG08}.

Since $P(B_1 < A_1) = P(A_1 < B_2) = P(A_2 < B_1) = 0$, then
\begin{eqnarray}\label{cond}
h_{ij} = 0\ \text{if}\ i > j,\nonumber \\
h'_{ij} = 0\ \text{if}\ i < j,\nonumber\\
h''_{ij} = 0\ \text{if}\ i < j.
\end{eqnarray}
This determines that
\begin{eqnarray}\label{setting}
U^\ast &=& F. \text{Orthogonalize}[F. H^T],\nonumber\\
V^\dagger &=& \text{Orthogonalize}[H],
\end{eqnarray}
where $F$ is a constant matrix with $f_{i, n+1-i} = 1$ and the rest being all zero, and `Orthogonalize' means the Gram-Schmidt process.\\

Although $\max P_{II}$ increases with the dimension as shown in ref.\cite{JAZ13}, here will prove that $\max P_{II}(1,2)$ is independent of dimension.
\begin{theorem}
$
\max P_{II}(1,2) =\max |q_{12}|^2 = \frac{5\sqrt{5} - 11}{2}.
$
\end{theorem}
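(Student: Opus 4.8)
The plan is to adapt the argument used for Theorem~1. The three triangularity conditions in \eqref{cond} collapse $q_{12}$ to a product of three scalars: since $Q=H'V$ and $H'=UH$ is lower triangular, its first row is $(h'_{11},0,\dots,0)$, so $q_{12}=h'_{11}v_{12}$; and since $H$ is upper triangular its first column is $(h_{11},0,\dots,0)^{T}$, so $h'_{11}=(UH)_{11}=u_{11}h_{11}$. Hence $P_{II}(1,2)=|q_{12}|^2=|u_{11}|^2\,|h_{11}|^2\,|v_{12}|^2$, and it remains to bound the two free factors $|u_{11}|^2$ and $|v_{12}|^2$.

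Assume (as in Theorem~1, the non-invertible cases being handled by continuity) that $H$ and $H_{22}$ are invertible. The identity $\vec{u}_{1}H=(h'_{11},0,\dots,0)$ for the first row $\vec{u}_{1}$ of $U$ forces $\vec{u}_{1}$ to be proportional to the first row of $H^{-1}$; using the block decomposition of $H$ with bottom-right block $H_{22}$ (notation of Sec.~II), that row equals $(h_{11}^{-1},\,-h_{11}^{-1}\vec{h}_{12}H_{22}^{-1})$, and normalizing gives $|u_{11}|^2=1/(1+|\vec{h}_{12}H_{22}^{-1}|^2)$ --- exactly the quantity that appears in the proof of Theorem~1. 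Symmetrically, the first row of $H''=HV$ being $(h''_{11},0,\dots,0)$ forces the first column of $V$ to be proportional to $(h_{11},\vec{h}_{12})^{\dagger}$, so $|v_{11}|^2=|h_{11}|^2/(|h_{11}|^2+|\vec{h}_{12}|^2)$, and since the second column of $V$ is orthogonal to the first, $|v_{12}|^2\le 1-|v_{11}|^2=|\vec{h}_{12}|^2/(|h_{11}|^2+|\vec{h}_{12}|^2)$, with equality attainable by a suitable choice.

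Multiplying the three factors and using Lemma~2 in the form $|\vec{h}_{12}H_{22}^{-1}|\ge|\vec{h}_{12}|/|H_{22}|$ gives
\[
P_{II}(1,2)\le \frac{|h_{11}|^2\,|\vec{h}_{12}|^2\,|H_{22}|^2}{(|\vec{h}_{12}|^2+|H_{22}|^2)(|h_{11}|^2+|\vec{h}_{12}|^2)}.
\]
Because $H$ is upper triangular we have $\vec{h}_{21}=0$, so normalization reads $|h_{11}|^2+|\vec{h}_{12}|^2+|H_{22}|^2=1$; writing $x=|\vec{h}_{12}|^2$, $y=|H_{22}|^2$, $z=|h_{11}|^2$, the right-hand side is $xyz/[(x+y)(x+z)]$ on the simplex $x+y+z=1$, which is a mere relabeling of the function $xyz/[(x+y)(y+z)]$ maximized in Theorem~1. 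Its maximum is therefore again $\frac{5\sqrt5-11}{2}$, attained (up to local unitaries) by a state supported on a single $2\times2$ block, with amplitudes-squared $(3-\sqrt5)/2$, $\sqrt5-2$, $(3-\sqrt5)/2$ on $|11\rangle,|12\rangle,|22\rangle$ in the $A_{1},B_{1}$ basis.

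The step I expect to be the main obstacle is verifying that the chain of inequalities is saturable \emph{by one admissible triple} $(U,H,V)$: equality in Lemma~2 forces $H_{22}$ to be rank one, and this must be shown compatible with the full set of triangularity constraints \eqref{cond} and with the sharpness conditions for $|v_{12}|^2$ and for the maximum of $xyz/[(x+y)(x+z)]$. As for $P_I$ this collapses the extremal problem to an effectively two-outcome state (built from the above $2\times2$ block via \eqref{setting}), for which the value $\frac{5\sqrt5-11}{2}$ is checked directly; and the invertibility hypotheses on $H$ and $H_{22}$ are removed by the same continuity argument.
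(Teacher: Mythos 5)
Your proof is correct and follows essentially the same route as the paper's: the factorization $q_{12}=u_{11}h_{11}v_{12}$, the bound $|u_{11}|^2\le |H_{22}|^2/(|H_{22}|^2+|\vec{h}_{12}|^2)$ via $(u_{12},\dots,u_{1n})=-\vec{h}_{12}H_{22}^{-1}u_{11}$, the bound $|v_{12}|^2\le |\vec{h}_{12}|^2/(|h_{11}|^2+|\vec{h}_{12}|^2)$ (which the paper gets by an explicit Gram--Schmidt computation and you get by noting that the first column of $V$ is forced to be proportional to $(h_{11},\vec{h}_{12})^\dagger$), and the same reduction to maximizing $xyz/[(x+y)(x+z)]$ on the simplex, yielding $\tfrac{5\sqrt{5}-11}{2}$. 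One minor slip: the step $|v_{12}|^2\le 1-|v_{11}|^2$ follows from the first \emph{row} of the unitary $V$ being a unit vector, not from orthogonality of its first two columns, but the inequality itself is valid and the argument stands.
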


\begin{proof}
Without lose of generality, we similarly assume $H_{22}$ is invertible here and $P_{II}(1,2)$ is continuous.\\
By $Q = U H V$ and conditions\eqref{cond}, we know $q_{12} = u_{11} h_{11} v_{12}$ and
\begin{eqnarray}
\vec{h}_{12} u_{11} + (u_{12}, u_{13}, \cdots, u_{1n}) H_{22} = 0,
\end{eqnarray}
thus,
\begin{eqnarray}
(u_{12}, \cdots, u_{1n}) = - \vec{h}_{12} H_{22}^{-1} u_{11}.
\end{eqnarray}
Since $\sum_{i = 1}^n |u_{1i}|^2 = 1$, then
\begin{eqnarray}
|u_{11}|^2 = \frac{1}{1 + |\vec{h}_{12} H_{22}^{-1}|^2} \leq \frac{|H_{22}|^2}{|H_{22}|^2 + |\vec{h}_{12}|^2},
\end{eqnarray}

Through the Gram-Schmidt process and the fact that $v_{12}^\ast$ equals the element in the second row and first column of $V^\dagger$, we know
\begin{eqnarray}
|v_{12}|^2 &=& |v_{12}^\ast|^2\nonumber\\
&=& \frac{|h_{11}|^2 |\langle \vec{h}_{12}, \vec{t} \rangle|^2}{(|h_{11}|^2 + |\vec{h}_{12}|^2)^2|\vec{t}|^2  - (|h_{11}|^2 + |\vec{h}_{12}|^2) |\langle \vec{h}_{12}, \vec{t} \rangle|^2}\nonumber\\
&\leq& \frac{|\vec{h}_{12}|^2}{|h_{11}|^2 + |\vec{h}_{12}|^2},
\end{eqnarray}
where `$=$' is achieved only when $|\langle \vec{h}_{12}, \vec{t} \rangle| = |\vec{h}_{12}||\vec{t}|$.\\
Thus
\begin{eqnarray}
|q_{12}|^2 \leq \frac{|h_{11}|^2 |\vec{h}_{12}|^2 |H_{22}|^2}{(|h_{11}|^2 + |\vec{h}_{12}|^2)(|H_{22}|^2 + |\vec{h}_{12}|^2)},
\end{eqnarray}
since $|h_{11}|^2 + |\vec{h}_{12}|^2 + |H_{22}|^2 \leq 1$, where `$=$' is achieved only when $\vec{h}_{21} = 0$. As a result, we finally obtain
\begin{eqnarray}
\max P_{II}(1,2) =\max |q_{12}|^2 = \frac{5\sqrt{5} - 11}{2}.
\end{eqnarray}
\end{proof}

\section{A Conjecture}

The fact that $\sum_{i<j}^n P_{II}(i,j)$ for $n=3, 4, 5, 6$ are also independent of the system's dimension leads us to a conjecture:
\begin{theorem}
For an arbitrary $n$, $\max \sum_{i<j}^n P_{II}(i,j)$ is independent of the system's dimension $k$, i.e., device-independent. The optimal state is equivalent to the standard form
\[
\begin{bmatrix}
H_n &  \\
    & \mathbf{0}_{k-n}\\
\end{bmatrix},
\]
where $H_n$ is a optimal state for $\max \sum_{i<j}^n P_{II}(i,j)$ in the $n$-dimension system.
\end{theorem}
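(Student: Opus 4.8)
The plan is to reduce the $k$-dimensional optimisation to a genuinely $n$-dimensional one and then rerun the ``bound each factor, then analyse equality'' argument that proved Theorem~2. The first move is a \emph{dimension reduction}: for $i<j\le n$ the entry $q_{ij}$ of $Q=UHV$ involves only the principal $n\times n$ corners of $U$, $H$, $V$. Indeed, writing $Q=H'V$ with $H'=UH$ lower triangular, conditions~\eqref{cond} give $q_{ij}=\sum_{b\le i}h'_{ib}v_{bj}$ and $h'_{ib}=\sum_{a\le b}u_{ia}h_{ab}$, so every factor that appears carries indices $\le n$; hence $\sum_{i<j}^{n}P_{II}(i,j)$ is a function of the corners $U_{[n]},H_{[n]},V_{[n]}$ alone. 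Since $H$ is upper triangular, rows beyond $n$ never reach the first $n$ columns, so $(UH)_{[n]}=U_{[n]}H_{[n]}$ is lower triangular, and $U_{[n]},V_{[n]}$ are contractions (principal submatrices of unitaries). The one place the extra dimensions survive is the $V$-side condition, $(HV)_{[n]}=H_{[n]}V_{[n]}+H_{[\le n,>n]}V_{[>n,\le n]}$ being lower triangular, so $H_{[n]}V_{[n]}$ is lower triangular only up to a leakage term of norm $\le\|H_{[\le n,>n]}\|\le(1-\|H_{[n]}\|^{2})^{1/2}$.

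Next I would turn this into an upper bound. Let $M_n$ be the maximum of $S_n:=\sum_{i<j}^{n}|q_{ij}|^{2}$ over $n$-dimensional instances. Every $k$-dimensional instance feeds, via the reduction, into the relaxed $n\times n$ problem: maximise $S_n$ over triples with $H_{[n]}$ upper triangular, $\|H_{[n]}\|\le1$, $U_{[n]},V_{[n]}$ contractions, $U_{[n]}H_{[n]}$ lower triangular, and $H_{[n]}V_{[n]}$ lower triangular up to a perturbation of norm $\le(1-\|H_{[n]}\|^{2})^{1/2}$; the value of $S_n$ is unchanged. So $\max_k S_n$ is at most this relaxed value, and it remains to show the relaxed maximum equals $M_n$. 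For that one argues, as in the endgame of the proof of Theorem~2, that equality throughout forces $\|H_{[n]}\|=1$ (which kills the leakage), forces $U_{[n]},V_{[n]}$ to be genuine unitaries and to coincide with the Gram--Schmidt settings~\eqref{setting} of $H_{[n]}$, and hence forces $H$ into the block form with top-left block $H_n$ and everything else zero --- the direct analogue of ``$\vec h_{21}=0$'' collapsing the extremiser into a $2\times2$ corner in Theorem~2, plus the observation that any norm left in the tail block $H_{[>n,>n]}$ only depresses $S_n$.

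A more computational alternative is induction on $n$. Split $S_n=\sum_{j=2}^{n}|q_{1j}|^{2}+\sum_{2\le i<j\le n}|q_{ij}|^{2}$. By~\eqref{cond}, $q_{1j}=u_{11}h_{11}v_{1j}$, so the first sum is $|u_{11}|^{2}|h_{11}|^{2}\sum_{j=2}^{n}|v_{1j}|^{2}\le|u_{11}|^{2}|h_{11}|^{2}(1-|v_{11}|^{2})$, which is estimated exactly as in Theorem~2 from Lemma~2 and the Gram--Schmidt formula for $v_{11}$. The second sum matches $S_{n-1}$ for the ``corner'' instance obtained by compressing away the first direction, to which the inductive hypothesis would apply --- but the catch is that Gram--Schmidt on the remaining rows does not decouple from the first row unless it is already aligned with the first axis, so one first has to check (or force, using the rigidity of the extremiser) that the inherited data again obeys~\eqref{cond}--\eqref{setting}.

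I expect the real obstacle to be closing either argument quantitatively. The factor-by-factor bounds that handled the single entry $q_{12}$ in Theorem~2 are far too lossy once summed over many pairs $(i,j)$ --- indeed the first-row contribution $\sum_{j=2}^{n}|q_{1j}|^{2}$ alone can already exceed $\tfrac{5\sqrt5-11}{2}$ --- so there is a genuine trade-off between the first-row terms and the corner terms that must be captured exactly. Certifying it will probably require either an explicit description of the optimal family $H_n$ (the paper records it for $n\le6$) together with a perturbative check that venturing outside the $n\times n$ block strictly decreases $S_n$, or a semidefinite/duality certificate for the relaxed problem above. As in the proofs of Theorems~1 and~2, the genericity caveats ($H_{22}$ invertible, $P_{II}$ continuous) would be dispatched by a limiting argument.
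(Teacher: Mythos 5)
The first thing to note is that the paper does not prove this statement at all: despite being typeset as a theorem, it is explicitly introduced as a conjecture, and the only support offered is numerical --- $\sum_{i<j}^{3}P_{II}(i,j)=0.141327$ for $k=3,4,5$, together with the observation that the optimizers $H_{3,4}$ and $H_{3,5}$ reduce (after a basis redefinition on Bob's side) to the block form $\mathrm{diag}(H_3,\mathbf{0})$. So there is no proof in the paper to compare yours against; the only question is whether your proposal settles the open problem. It does not, and to your credit you say so.

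What you do have is a correct and genuinely useful reduction that goes beyond the paper. Since $H$ is upper triangular and $H'=UH$ is lower triangular, $q_{ij}=\sum_{a\le b\le i}u_{ia}h_{ab}v_{bj}$ indeed involves only indices $\le n$ when $i<j\le n$; the block embedding gives $\max_k S_n\ge M_n$; and the corner relaxation (contractions $U_{[n]},V_{[n]}$, leakage $H_{[\le n,>n]}V_{[>n,\le n]}$ of norm at most $(1-\|H_{[n]}\|^2)^{1/2}$) gives a legitimate upper bound. But the entire content of the conjecture is then concentrated in the unproved claim that this relaxed maximum equals $M_n$, i.e.\ that strict contractions and nonzero leakage can never raise $S_n$. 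The appeal to ``the endgame of Theorem 2'' does not transfer: there the objective was a single product $|u_{11}h_{11}v_{12}|^2$ and each factor admitted a separate bound with a clean equality analysis, whereas $S_n$ sums many entries whose individual optima pull in different directions --- your own remark that the first-row contribution alone can exceed $(5\sqrt5-11)/2$ already shows the factorwise bounds cannot be summed. No equality analysis for the relaxed problem is supplied, the inductive alternative founders on the non-decoupling of the Gram--Schmidt settings (as you note), and the proposed remedies (an explicit optimal family $H_n$, a duality certificate) are named but not executed. So the proposal is a sensible plan of attack with the decisive step missing; the statement remains a conjecture both in the paper and after your attempt.
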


Here we list some numerical results: $\sum_{i<j}^3 P_{II}(i,j) = 0.141327$ for $3, 4, 5$-dimension system,
\begin{eqnarray}
H_3 = \left(
\begin{array}{ccc}
 0.498328 & 0.316483 & 0.329301 \\
 0 & 0.441108 & 0.316483 \\
 0 & 0 & 0.498328 \\
\end{array}
\right)
\end{eqnarray}
which is the optimal state for three-dimensional Hardy's paradox of the second type, and
\begin{eqnarray}
H_{3,4} &=& \left(
\begin{array}{cccc}
 0.498328 & 0.316483 & 0.329301 & 0\\
 0 & 0.441108 & 0.316483 & 0\\
 0 & 0 & 0.498328 & 0\\
 0 & 0 & 0 & 0\\
\end{array}
\right)\nonumber\\
&=&\begin{bmatrix}
H_3 &  \\
    & \mathbf{0}_1\\
\end{bmatrix}
\end{eqnarray}
which is an optimal state in the standard form.
\begin{eqnarray}
H_{3,5} =\hspace{20em} \nonumber\\
\left(
\begin{array}{ccccc}
 0.49832 & 0.316487 & 0.232321 & 0.187338 & 0.139177 \\
 0 & 0.441109 & 0.223283 & 0.18005 & 0.133762 \\
 0 & 0 & 0.351577 & 0.283503 & 0.210619 \\
 0 & 0 & 0 & 0 & 0 \\
 0 & 0 & 0 & 0 & 0 \\
\end{array}
\right)
\end{eqnarray}
which is also an optimal state. If we redefine the third element of $B_2$'s basis as $(0, 0, 0.351577, 0.283503, 0.210619)$, then $H_{3,5}$ changes to the standard form
\[
\left(
\begin{array}{ccccc}
 0.498318 & 0.316486 & 0.329299 & 0. & 0. \\
 0. & 0.441107 & 0.316488 & 0. & 0. \\
 0. & 0. & 0.498336 & 0. & 0. \\
 0. & 0. & 0. & 0. & 0. \\
 0. & 0. & 0. & 0. & 0. \\
\end{array}
\right)
\approx
\begin{bmatrix}
H_3 &  \\
    & \mathbf{0}_2\\
\end{bmatrix}.
\]
That is, $H_{3,5}$ is equivalent to the standard form.

For $H_n, n = 2, 3, \cdots, 7$ and the corresponding maximum values see the ref. \cite{JAZ13} for details.

The merits of such a conjecture are twofold. Theoretically, it implies that only a few terms in $P_{II}$ play an active role in revealing the nonlocal
feature of quantum states and this specific set remains the same for every dimension $n$. In experiments, it allows one to measure fewer joint correlations, with lower measuring deviations and less effort in state preparation, to obtain a greater quantum violation of the local realistic theory.

Moreover, we further conjecture that, while the first type of extension provides a minimal generalization \cite{KC05} of Hardy's paradox, the second type of extension maybe provides a maximal one.

\section{Conclusions}

We have reproved the dimension-independency of the first type of extensions of Hardy's paradox, and demonstrated that if only one particular term $P_{II}(1,2)$ were considered, the maximum probability of nonlocal events in the second type could be also dimension-independent. With numerical evidence, we further conjectured that if more than one particular terms $\sum_{i<j}^n P_{II}(i,j)$ were considered, then the maximum probability of nonlocal events is also independent of dimensions. Hence, the role of dimension played in Hardy's paradox is this: it determines which term or how many terms can be considered in the paradox. We also note that while the first type of extensions provides a minimal generalization \cite{KC05} of Hardy's paradox, the second type of generalization may provide a maximal one.  Our results may stimulate studies on a promising unified perspective of Hardy's paradox in further investigations.


\begin{acknowledgments}
J.L.C. is supported by the National Basic Research Program (973
Program) of China under Grant No.\ 2012CB921900 and the NSF of China
(Grant Nos. 11175089 and 11475089). This
work is also partly supported by the National Research Foundation
and the Ministry of Education, Singapore.
\end{acknowledgments}

\end{document}